\documentclass[10pt,twocolumn]{article}

\usepackage[T1]{fontenc}
\usepackage[utf8]{inputenc}
\usepackage[a4paper, top=2.0cm, bottom=2.4cm, left=1.5cm, right=1.5cm, columnsep=0.65cm]{geometry}
\usepackage{amsmath}
\usepackage{amssymb}
\usepackage{amsthm}
\usepackage{booktabs}
\usepackage{hyperref}
\usepackage{microtype}
\usepackage{graphicx}
\usepackage{url}
\usepackage{xcolor}
\usepackage{array}
\usepackage{tabularx}
\usepackage{titlesec}
\usepackage{caption}
\usepackage{enumitem}
\usepackage{natbib}
\usepackage{algorithm}
\usepackage{algpseudocode}
\usepackage{multirow}
\usepackage{fancyhdr}

\hypersetup{colorlinks=true,linkcolor=blue!60!black,citecolor=blue!60!black,urlcolor=blue!60!black}

\titleformat{\section}{\normalfont\large\bfseries\scshape}{\thesection}{1em}{}
\titleformat{\subsection}{\normalfont\normalsize\bfseries}{\thesubsection}{1em}{}
\titleformat{\subsubsection}{\normalfont\normalsize\itbfseries}{\thesubsubsection}{1em}{}

\setlist[itemize]{leftmargin=*, topsep=2pt, itemsep=1pt}
\setlist[enumerate]{leftmargin=*, topsep=2pt, itemsep=1pt}

\newcolumntype{Y}{>{\raggedright\arraybackslash}X}

\newtheorem{invariant}{Invariant}
\newtheorem{proposition}{Proposition}

\title{\textbf{bioETH-PRS: Confidential Polygenic Risk Scoring without\\[2pt]
       a Trusted Evaluator via Fully Homomorphic Encryption on\\[2pt]
       a Programmable Blockchain}}

\author{%
  \begin{minipage}{\textwidth}
  \raggedright
  \textbf{Kimon Antonios Provatas$^{1,+,*}$, Christos Galanopoulos$^{1,+}$, Ilias Georgakopoulos-Soares$^{1,,*}$}\\[2pt]
  \normalsize $^{1}$Division of Pharmacology and Toxicology, College of Pharmacy, The University of Texas at Austin, Dell Pediatric Research Institute, Austin, TX, USA\\[2pt]
  \small $^{+}$Co-first authors \quad $^{*}$Corresponding authors: \texttt{kap4722@my.utexas.edu}; \texttt{ilias@austin.utexas.edu}\\[2pt]
  \small\textit{Blockchain $\cdot$ Genomic Privacy $\cdot$ FHE $\cdot$ Smart Contracts $\cdot$ Privacy-Preserving Output Release}
  \end{minipage}
}
\date{}

\begin{document}
\maketitle
\thispagestyle{fancy}

\begin{figure*}[t]
  \centering
  \includegraphics[width=\textwidth, keepaspectratio]{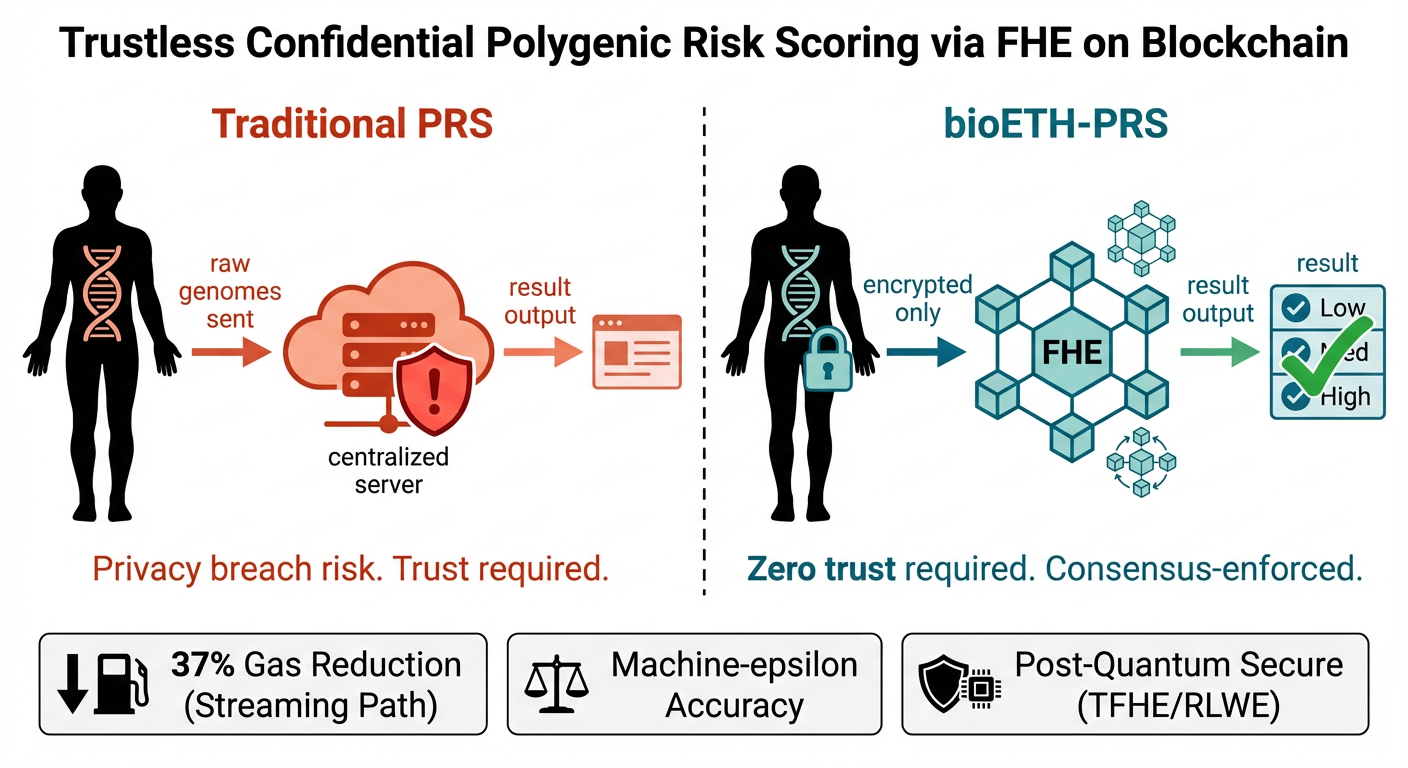}
  \caption{\textbf{Graphical Abstract.}  bioETH-PRS replaces the trusted evaluator of conventional
  homomorphic PRS pipelines with consensus-enforced smart contracts.
  \emph{Left:} Traditional centralised approaches transmit raw genotypes to a cloud evaluator,
  requiring user trust in a designated third party.
  \emph{Right:} bioETH-PRS keeps all genomic data and model weights encrypted throughout,
  with computation verified by blockchain consensus.
  Key results: 37\% gas reduction via streaming execution; machine-epsilon quantisation accuracy
  on real GWAS fixtures; post-quantum security under TFHE/RLWE.}
  \label{fig:graphical_abstract}
\end{figure*}

\begin{abstract}
\noindent
Polygenic risk scores (PRSs) aggregate genetic effect estimates to predict disease
susceptibility, yet clinical deployment often exposes raw genotype data to third-party
compute infrastructure.  Prior homomorphic-encryption approaches, still require trust in a designated evaluator.  We present \textbf{bioETH-PRS},
a protocol that replaces that evaluator role with immutable smart
contracts on a blockchain supporting Fully Homomorphic Encryption (fhEVM).  Using the
integer-exact TFHE scheme, bioETH-PRS computes the PRS dot product entirely within the
encrypted domain, keeping both genotype dosage vectors and GWAS weight vectors hidden
from external parties throughout execution.  We introduce a three-step fixed-point
quantisation scheme for representing signed GWAS weights as unsigned 64-bit integers,
achieving machine-epsilon reconstruction accuracy on validated fixtures. A four-contract
architecture separates data custody, model publication, computation, and output release,
and supports both a classic chunked path and a streaming path, with the latter reducing
mock-measured gas by 37\%.  An on-chain noisy output oracle emits an encrypted noisy-score handle and a publicly
decryptable ternary category, reducing raw score exposure and probing risk.  Prototype
evaluation on real GWAS fixtures confirms linear gas scaling and suggests that the
approach may be cost-competitive in low-gas deployment environments.
\end{abstract}

\section*{Key Points}
\begin{itemize}
  \item bioETH-PRS replaces the trusted evaluator in encrypted PRS computation with auditable smart contracts on an FHE-enabled blockchain.
  \item The protocol protects both patient genotypes and GWAS model weights through encrypted execution and ACL-gated output release.
  \item A fixed-point quantisation scheme enables exact TFHE-compatible PRS computation with machine-epsilon reconstruction accuracy on validated fixtures.
  \item A streaming execution path reduces mock-measured gas consumption by 37\% relative to the classic chunked path.
  \item Anti-probing controls combine noisy output release with per-model, per-wallet, and per-sample rate limiting.
\end{itemize}

\section{Introduction}
\label{sec:intro}

Advances in genome sequencing have made polygenic risk scores (PRSs) a central tool
in precision medicine \citep{wray2021polygenic,inouye2018genomic}.  A PRS aggregates
the dosage-weighted effects of genetic variants:
\begin{equation}
  \mathrm{PRS} = \sum_{i=1}^{N} g_i \cdot \beta_i,
  \label{eq:prs}
\end{equation}
where $g_i \in \{0,1,2\}$ is the allele dosage of a genetic variant $i$ and $\beta_i$ is the
corresponding GWAS effect weight.  Published polygenic scores in the PGS Catalog range
from small panels to genome-wide scores with large variant sets \citep{lambert2019polygenic},
and hospital deployment increasingly involves
transmitting raw genotype data to external compute infrastructure, a practice that
raises acute privacy concerns given the re-identification risk of genomic sequences
\citep{gymrek2013,erlich2014}.

\textbf{The double-privacy problem.}
Two classes of sensitive information require simultaneous protection: (1)
\emph{patient genotypes} ($g_i$), identifying, heritable, and permanent; and
(2) \emph{GWAS model weights} ($\beta_i$), derived from restricted research cohorts.
More generally, model interfaces and outputs can leak sensitive information through
model-inversion attacks \citep{fredrikson2015modelinversion}.
Centralised cloud solutions protect neither of them, and both genotypes and weights must
be decrypted before arithmetic can proceed.

\textbf{The FHE opportunity.}
Fully Homomorphic Encryption (FHE) permits arbitrary arithmetic on encrypted data
without decryption.  Knight et al.\ \citep{knight2026heprs} demonstrate this with
HEPRS, an open-source pipeline that computes a 110,000-SNP schizophrenia PRS under
the CKKS scheme \citep{cheon2017ckks} with near-zero accuracy loss ($r > 0.999$,
MSE~$< 2.3 \times 10^{-6}$).  HEPRS, however, remains architecturally centralised:
it relies on a three-party trust model (client, modeler, evaluator), and a colluding
evaluator and client can reveal GWAS model weights.  Trust in the evaluator is an
assumption, not a protocol guarantee.  Furthermore, HEPRS requires 3--4~GB of RAM
per individual even at 110,000 SNPs, and output is a raw score---there is no built-in
mechanism to prevent score accumulation or model probing.

\textbf{Our contribution.}
We propose \textbf{bioETH-PRS}, which removes the designated evaluator of prior
centralised FHE pipelines.  On a blockchain augmented with an FHE coprocessor
(fhEVM \citep{zamafhevm}), the evaluator is replaced by a deterministic, publicly
auditable smart contract.  This shifts trust from a single evaluator to auditable
contract logic, blockchain consensus, the fhEVM coprocessor stack, and the ACL/
decryption infrastructure (Figure~\ref{fig:graphical_abstract}).

\smallskip\noindent\textbf{Distinctions from HEPRS.}
Although both systems compute the same encrypted PRS inner product, bioETH-PRS adds
four design elements that are absent from HEPRS.  First, TFHE/fhEVM exposes only
unsigned integer arithmetic, requiring a dedicated fixed-point encoding and recovery
scheme for signed GWAS weights; HEPRS instead uses CKKS with native approximate
floating-point support.  Second, bioETH-PRS separates data custody, model publication,
computation, and output release into four independently auditable contracts with
different trust and upgrade surfaces.  Third, it introduces two fhEVM-specific
execution strategies, classic and streaming, that trade relay flexibility against
gas and storage overhead under the handle/SSTORE model.  Fourth, it restricts outputs
through an oracle-enforced post-processing layer with anti-probing rate limits, rather
than releasing raw scores directly.

\smallskip\noindent Our main contributions are:
\begin{itemize}
  \item A \textbf{four-contract architecture without a trusted evaluator} separating genomic data
        custody, model publication, computation, and post-processed output release---each independently
        auditable with distinct trust surfaces and upgrade lifecycles.
  \item A \textbf{three-step fixed-point quantisation scheme} bridging signed GWAS
        floats and unsigned TFHE integers with provable overflow safety and
        machine-epsilon reconstruction error, together with an automated advisor that
        selects the minimum viable scale factor before model publication.
  \item Two \textbf{gas-optimised execution strategies}: a classic chunked path
        enabling multi-party and relay-mediated computation, and a streaming path with
        37\% gas reduction for single-requester flows.
  \item An \textbf{on-chain noisy output oracle} with cryptographically generated noise,
        mandatory oracle-required enforcement, and minimum threshold-gap constraints
        that together reduce score probing risk and prevent raw-score bypass.
  \item A \textbf{rate-limiting mechanism} based on per-model, per-wallet, and per-sample,
        block-windowed job quotas that raise the cost of model weight extraction to
        thousands of hours at recommended settings.
  \item Prototype evaluation on real HEPRS GWAS fixtures (100--5,000 SNPs)
        confirming linear gas scaling, exact score correctness, and economically
        plausible analysis costs under L2-equivalent gas pricing assumptions.
\end{itemize}

\section{Background}
\label{sec:background}

\subsection{Polygenic Risk Scores}

A PRS is the weighted inner product of a patient's genotype dosage vector
$\mathbf{g} \in \{0,1,2\}^N$ and a GWAS effect weight vector $\boldsymbol{\beta}
\in \mathbb{R}^N$ (Eq.~\ref{eq:prs}).  Effect weights are typically estimated by
large-scale GWAS over tens of thousands of individuals and are signed floating-point
values in the range $[-0.5, +0.5]$.  Standard pipelines such as PLINK~2
\citep{chang2015plink} and LDpred2 \citep{prive2020ldpred2} compute PRSs in
milliseconds on plaintext data; FHE approaches trade runtime for privacy.

PRS models span a wide range in SNP inclusion depending on construction methodology. Sparse or clinically oriented models may include hundreds to a few thousand variants, while genome-wide approaches incorporate tens of thousands to millions of SNPs. In this work, we perform a proof of principle study with 5,000 SNPs, with on-chain FHE computation.

The computation in Eq.~\ref{eq:prs} is an inner product between two vectors of
moderate length.  This is exactly the primitive that FHE systems are designed to
support efficiently, making PRS computation a natural target for FHE-based privacy
enhancement.

\subsection{Fully Homomorphic Encryption}

FHE schemes differ in the arithmetic they support and their performance.  Two are
directly relevant to this work.

\textbf{CKKS} \citep{cheon2017ckks} supports approximate arithmetic over real-valued data, is
naturally suited to floating-point computations, and underpins HEPRS.  It allows
arbitrary depth computation at the cost of accumulating approximation error.  The
Lattigo library \citep{lattigo} implements CKKS in Go with the parameter sets used
by HEPRS.  CKKS is computationally expensive: evaluating a 110,000-SNP PRS on a
single CPU node requires approximately 4.9~s per individual and up to 130~GB of RAM
for a cohort of 1,000 \citep{knight2026heprs}.

\textbf{TFHE} \citep{chillotti2020tfhe} operates on exact unsigned integers and
supports bootstrapping to arbitrary computation depth.  Zama's fhEVM implements TFHE
as EVM precompile operations exposed to smart contracts.  TFHE provides deterministic
integer arithmetic, a better fit for smart contracts where bit-exact reproducibility
is required for consensus. The cost is that signed floating-point weights must be
converted to unsigned integer encodings, motivating our quantisation design
(Section~\ref{sec:quantization}).

TFHE security is grounded in the Ring Learning with Errors (RLWE) hardness assumption
\citep{chillotti2020tfhe}, which is believed to be quantum-resistant.  CKKS under the
RLWE assumption likewise provides post-quantum security.  Both are therefore
appropriate choices for long-term genomic data protection.

\subsection{Programmable Blockchain and fhEVM}

The Fully Homomorphic Ethereum Virtual Machine (fhEVM) extends the Ethereum
execution environment with TFHE precompile operations \citep{zamafhevm}.  Every
encrypted value is represented on-chain by a 32-byte opaque handle; actual
ciphertexts are managed by an off-chain coprocessor.  Smart contracts store only
handles and trigger FHE operations through coprocessor calls.

Access to encrypted outputs is governed by an on-chain ACL contract: a handle is
decryptable only by addresses explicitly granted access by the ACL.  The blockchain
is the trust anchor---no party can obtain a decryption unless the executing smart
contract grants it, and the smart contract's grant conditions are publicly auditable.

A per-transaction Homomorphic Computation Unit (HCU) budget limits the number of FHE
operations per block.  In our Ethereum mock-coprocessor prototype, we measured a
budget of 60--74 FHE operations per transaction (Section~\ref{sec:evaluation}); the
Sepolia testnet ceiling is expected to be higher but remains to be measured.  The
HCU constraint necessitates chunked computation
strategies for large input vectors.

\section{System Design}
\label{sec:design}

\subsection{Architecture Overview}

bioETH-PRS is structured as four independently deployable smart contracts forming
a linear pipeline from sample data custody to risk output (Figure~\ref{fig:arch}).
The four-layer separation reflects distinct trust surfaces and upgrade lifecycles:
the data registry protects patient data; the marketplace protects researcher GWAS
models; the compute engine implements computation logic; the oracle enforces privacy
policy.  Each contract is independently auditable and can be replaced without
affecting the others.

\begin{figure}[t]
  \centering
  \includegraphics[width=\columnwidth, keepaspectratio]{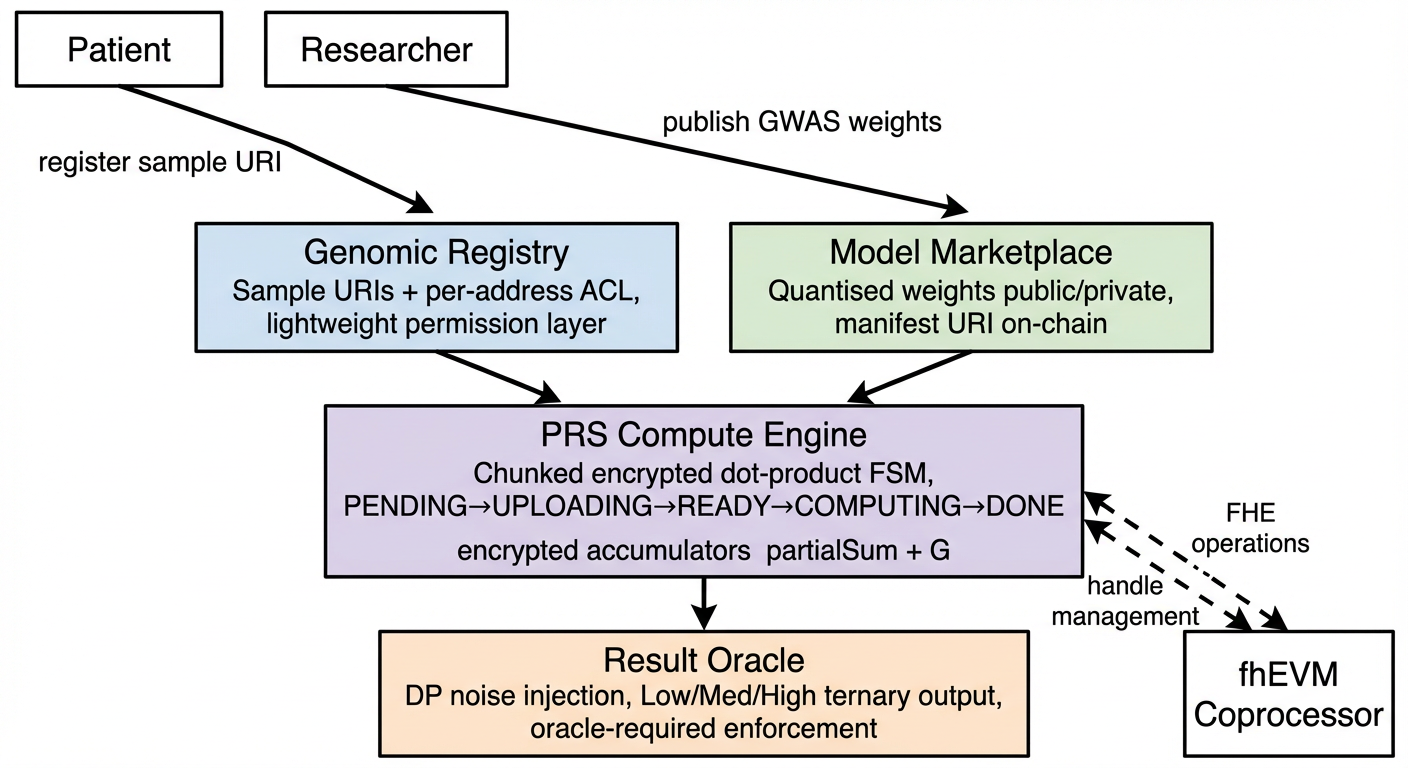}
  \caption{System architecture.  Four on-chain smart contracts form a linear pipeline
  from genomic data custody to privacy-preserving risk output.  The fhEVM coprocessor
  manages off-chain ciphertexts; only opaque 32-byte handles are stored on-chain.
  No plaintext genotype data or model weights are observable at any stage.}
  \label{fig:arch}
\end{figure}

\textbf{Genomic Registry.}
URI-based registry of sample datasets with per-address access control lists (ACLs).
Stores references (IPFS or Arweave URIs) to off-chain encrypted genotype data.
A patient registers samples and explicitly grants compute access to specific callers.
This contract performs no FHE operations; it is a lightweight permission layer.

\textbf{Model Marketplace.}
Immutable repository of GWAS weight vectors published in chunks.  Weights may be
stored as plaintext integers (public models, enabling a cheaper coprocessor-optimised
ciphertext-times-plaintext multiply, $\approx$60\% less expensive than
ciphertext-ciphertext) or as encrypted handles (private models).  A manifest URI
records quantisation metadata on-chain: scale factor, weight zero-point, score
offset, and provenance hash.  Models are immutable after finalisation; version
management is the responsibility of the publishing researcher.

\textbf{PRS Compute Engine.}
The core computation contract.  Implements a finite state machine
($\texttt{PENDING}\!\to\!\texttt{UPLOADING}\!\to\!\texttt{READY}\!\to\!\texttt{COMPUTING}\!\to\!\texttt{DONE}$)
that orchestrates chunked dot-product accumulation of patient SNPs against model
weights.  The engine tracks two encrypted running accumulators: the weighted partial
sum and the genotype sum (required for quantisation correction; see
Section~\ref{sec:quantization}).  Final score handles are ACL-granted only to the
requesting party.  Two parallel execution paths are supported (Section~\ref{sec:execution}).

\textbf{Result Oracle.}
Applies on-chain cryptographic noise and emits an encrypted noisy score handle plus
a publicly decryptable ternary risk category.  Noise is generated by the fhEVM's on-chain
random source, making it unknowable to any party before the transaction is mined.  The
oracle can be set as mandatory by the model owner, preventing direct requester access to
the raw score when oracle-required mode is enabled.

\subsection{Comparison with HEPRS}

Table~\ref{tab:comparison} summarises the key architectural differences between
bioETH-PRS and HEPRS \citep{knight2026heprs}.  HEPRS achieves near-perfect accuracy
on a 110,000-SNP schizophrenia model running on commodity hardware, but its
three-party trust assumption means a colluding evaluator and client can reveal model
weights.  bioETH-PRS removes the designated evaluator assumption by replacing that
role with auditable on-chain code.

The core mathematical operation is identical in both systems: the PRS dot product
of Eq.~\ref{eq:prs}.  The critical difference is in the trust model and the FHE
scheme.  CKKS handles signed floats natively; TFHE requires unsigned integer
encoding.  CKKS evaluates approximate arithmetic; TFHE arithmetic is exact.  The
evaluator-free architecture of bioETH-PRS comes at the cost of restricting the practical SNP
range to 100--5,000---a regime that captures a subset of curated PRS models
\citep{lambert2019polygenic} while remaining feasible for on-chain
computation.

\begin{table}[ht]
\centering
\caption{Architectural comparison: HEPRS vs.\ bioETH-PRS}
\label{tab:comparison}
\footnotesize
\setlength{\tabcolsep}{4pt}
\begin{tabularx}{\columnwidth}{@{}p{1.9cm}YY@{}}
\toprule
\textbf{Property} & \textbf{HEPRS} \citep{knight2026heprs} & \textbf{bioETH-PRS} \\
\midrule
FHE scheme        & CKKS (approx.\ floats)  & TFHE (exact integers)  \\
Evaluator         & Centralised server      & Smart contract         \\
Trust model       & 3-party non-collusion   & Evaluator removed; consensus/fhEVM dependent  \\
Weight privacy    & Evaluator-gated         & ACL-gated, immutable   \\
Output control    & External (optional)     & On-chain noisy release \\
GWAS weight type  & Signed floats           & Quant.\ uint64         \\
Max variants tested   & 110,000                 & 5,000 (scalable)       \\
Per-person latency& $\sim$4.9 s            & $\sim$386 ms (100 SNPs)\\
Score access      & By client               & Noisy score + category \\
Post-quantum      & Yes (CKKS/RLWE)         & Yes (TFHE/RLWE)        \\
\bottomrule
\end{tabularx}
\end{table}

\section{Quantisation Scheme}
\label{sec:quantization}

\subsection{The Representation Problem}

TFHE arithmetic on the fhEVM operates on unsigned 64-bit integers.
GWAS weights $\beta_i$ are signed floating-point values;
a naive fixed-point encoding that multiplies by a scale factor $s$ and rounds
produces quantised weights $q_i = \mathrm{round}(s \cdot \beta_i) \in \mathbb{Z}$,
which may be negative.  Negative integers cannot be stored in an unsigned 64-bit
integer and would silently wrap under modular overflow, corrupting results without any
observable error.  A correct encoding must guarantee that all intermediate encrypted
values remain in the non-negative range $[0, 2^{64} - 1]$ for any genomically valid
input $g_i \in \{0, 1, 2\}$.

\subsection{Three-Step Unsigned Encoding}

We address this with a three-step bijection that maps the space of signed
floating-point dot products into a guaranteed non-negative 64-bit integer range
(Figure~\ref{fig:quantization}).

\begin{figure}[t]
  \centering
  \includegraphics[width=\columnwidth, keepaspectratio]{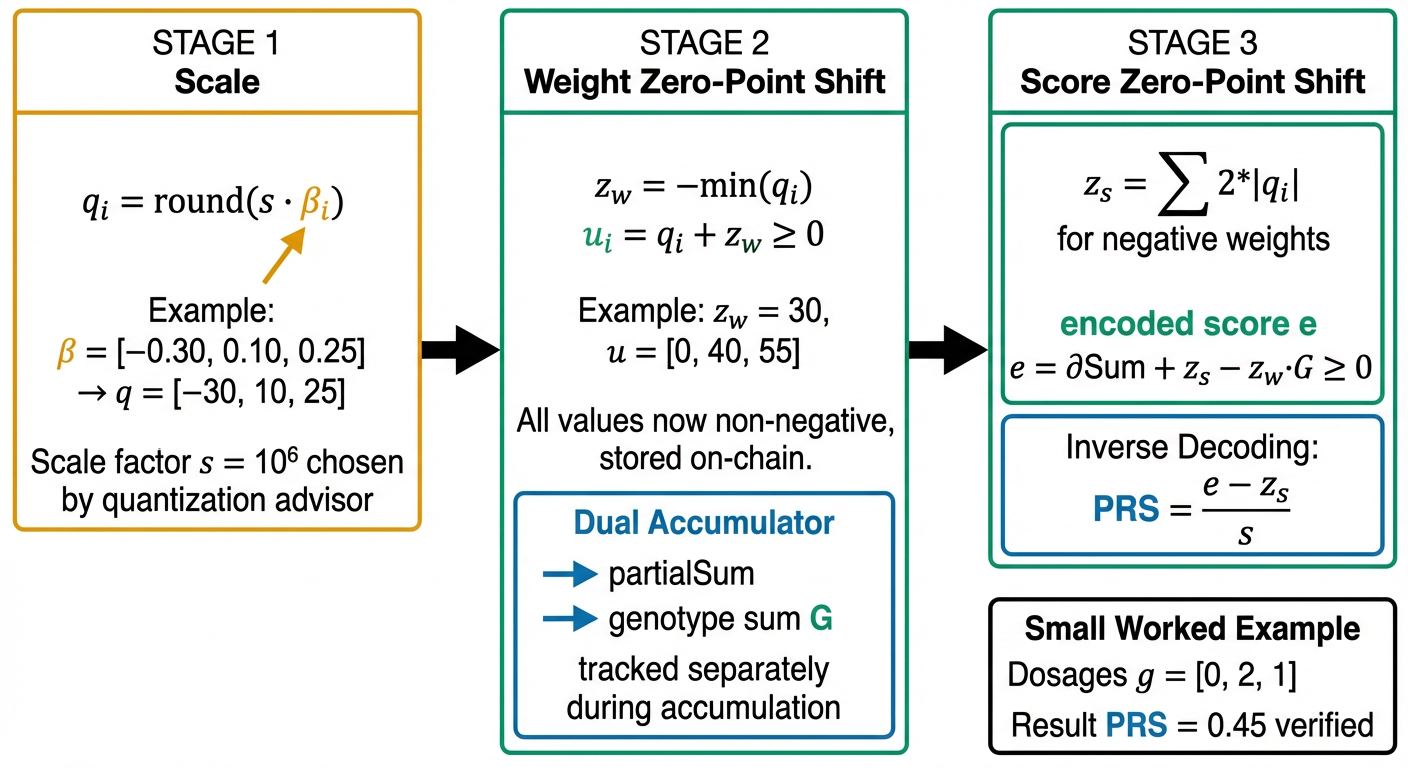}
  \caption{Three-step fixed-point quantisation scheme.  Signed GWAS floats are
  (1) scaled by $s$, (2) shifted by the weight zero-point $z_w$ to guarantee
  non-negative weights, and (3) shifted by the score zero-point $z_s$ to guarantee
  a non-negative encoded score.  Decoding inverts all three steps after decryption.}
  \label{fig:quantization}
\end{figure}

\textbf{Step 1: Scale.}
\begin{equation}
q_i = \mathrm{round}(s \cdot \beta_i), \quad q_i \in \mathbb{Z},
\end{equation}
where $s$ is a model-specific scale factor selected by an automated quantisation
advisor (Section~\ref{sec:advisor}).

\textbf{Step 2: Weight zero-point shift.}  Let
\begin{equation}
z_w = -\min_i q_i, \quad u_i = q_i + z_w \;\geq\; 0 \;\;\forall i.
\label{eq:wshift}
\end{equation}
The shifted weights $u_i \geq 0$ are stored on-chain.  The raw accumulation
$\sum_i g_i u_i$ overestimates the true PRS by the constant $z_w \cdot \sum_i g_i$,
requiring correction.  The engine tracks the genotype sum
$G = \sum_i g_i$ as a second encrypted accumulator to enable this correction.

\textbf{Step 3: Score zero-point shift.}
Even after the weight correction, the corrected sum
$P = \sum_i g_i u_i - z_w G$
can be negative for patients with many risk-decreasing alleles.  We define:
\begin{align}
z_s &= -\sum_{i:\beta_i<0} 2 q_i \;=\; \sum_{i:\beta_i<0} 2|q_i|,\\
e   &= P + z_s \;\geq\; 0.
\end{align}
$e$ is the \emph{encoded score} stored encrypted on-chain.  The on-chain
computation is structured to avoid any intermediate negative value:
\begin{equation}
e = \bigl(\mathrm{partialSum} + z_s\bigr) - \bigl(z_w \cdot G\bigr),
\label{eq:encode}
\end{equation}
where $z_s$ is added before subtracting $z_w G$, ensuring the intermediate
value $(\mathrm{partialSum} + z_s)$ is always non-negative.

\textbf{Decoding.}  After decryption by the authorised requester:
\begin{equation}
\mathrm{PRS} = \frac{e - z_s}{s}.
\end{equation}

\subsection{Worked Example}

Consider 3 genetic variants with weights $\boldsymbol{\beta} = [-0.30, 0.10, 0.25]$, scale
$s = 100$, and dosages $\mathbf{g} = [0, 2, 1]$.

\begin{enumerate}
  \item Quantise: $\mathbf{q} = [-30, 10, 25]$.
  \item Weight shift: $z_w = 30$, $\mathbf{u} = [0, 40, 55]$.
  \item Accumulate: partialSum $= 0{\cdot}0 + 2{\cdot}40 + 1{\cdot}55 = 135$;
        $G = 0 + 2 + 1 = 3$.
  \item Correction: $P = 135 - 30{\cdot}3 = 45$.
  \item Score shift: $z_s = 2{\cdot}30 = 60$; $e = 45 + 60 = 105$.
  \item Decode: PRS $= (105 - 60) / 100 = 0.45$.
  \item Verify: plaintext PRS $= 0{\cdot}(-0.30) + 2{\cdot}0.10 + 1{\cdot}0.25 = 0.45$. \checkmark
\end{enumerate}

\subsection{Overflow Safety}
\label{sec:overflow}

\begin{proposition}
Let $N$ be an arbitrary SNP count, let $s$ be the fixed-point scale factor, and
let $M$ satisfy $|q_i| \leq sM$ for every quantised weight $q_i$.  Then a
sufficient worst-case condition for unsigned 64-bit safety of the full encoding
pipeline is
\[
4 s M N \leq 2^{64} - 1.
\]
Equivalently,
\[
N \leq \left\lfloor \frac{2^{64} - 1}{4 s M} \right\rfloor.
\]
Under this condition, both the final encoded score $e$ and the intermediate
quantity $\mathrm{withOffset} = \mathrm{partialSum} + z_s$ in
Eq.~\ref{eq:encode} fit in $\texttt{uint64}$.
\end{proposition}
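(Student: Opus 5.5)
We bound every quantity the pipeline computes, using only $|q_i|\le sM$, $g_i\in\{0,1,2\}$ and $N$ terms. All intermediates are nonnegative integers, so uint64 safety reduces to showing each is at most $2^{64}-1$. Nonnegativity is already built into the construction of Section~\ref{sec:quantization}: $u_i\ge 0$ by Eq.~\ref{eq:wshift}, and $e=P+z_s\ge 0$ because $P\ge \sum_{i:\beta_i<0}2q_i=-z_s$. One caveat: this uses that $\beta_i<0$ coincides with $q_i<0$ up to rounding to zero, and $q_i=0$ terms are harmless. The remaining task is therefore the upper bounds.

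\textbf{Step 1 (weights and zero-points).} Since $z_w=-\min_i q_i\le sM$ (or $z_w=0$ if all $q_i\ge0$), we get $u_i=q_i+z_w\le 2sM$. Likewise $z_s=\sum_{i:\beta_i<0}2|q_i|\le 2sMN$, and more precisely $z_s\le 2sM\cdot|\{i:q_i<0\}|$.

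\textbf{Step 2 (partial sum and the offset quantity).} Every running value of $\mathrm{partialSum}=\sum_i g_iu_i$ is at most $2\cdot 2sM\cdot N=4sMN$. A crude bound on $\mathrm{withOffset}=\mathrm{partialSum}+z_s$ would be $6sMN$, which is too weak, so we do not bound the two pieces separately. Instead we split the index set into $I^-=\{i:q_i<0\}$ and $I^+$. Then
\[
\mathrm{withOffset}=\sum_i g_i(q_i+z_w)+\sum_{i\in I^-}2|q_i|
=\sum_{i\in I^+}g_iq_i+\sum_{i\in I^-}(2-g_i)|q_i|+z_wG .
\]
The first two sums are at most $2sM$ per index, hence at most $2sMN$ together. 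The term $z_wG$ is at most $sM\cdot 2N$. This gives $\mathrm{withOffset}\le 4sMN$.

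\textbf{Step 3 (encoded score).} Since $e=\mathrm{withOffset}-z_wG\le\mathrm{withOffset}$, we also have $e\le 4sMN$. More directly, $e=\sum_{I^+}g_iq_i+\sum_{I^-}(2-g_i)|q_i|\le 2sMN$. The product $z_wG\le 2sMN$ is also bounded, and the subtraction cannot underflow because $e\ge0$. Every intermediate is thus at most $4sMN\le 2^{64}-1$. The floor form of the condition follows because $N$ is an integer.

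\textbf{Main obstacle.} The delicate point is Step 2. The naive sum of the worst cases for $\mathrm{partialSum}$ and $z_s$ overshoots the claimed constant $4$. The regrouping above is what recovers it: it pairs each negative-weight index's contribution $g_i|q_i|$ with its offset share $2|q_i|$, and this pairing must be written out carefully. I would also check the edge case where no $q_i$ is negative ($z_w=0$), and the rounding subtlety between the sign of $\beta_i$ and the sign of $q_i$ in the definition of $z_s$.
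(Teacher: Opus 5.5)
Your proof is correct and reaches the paper's constant $4$, but with slightly different bookkeeping.

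\textbf{The paper's route.} It splits indices by the sign of $q_i$ and sharpens the bound on the shifted weights themselves: $u_i\le 2sM$ when $q_i>0$, but $u_i\le z_w\le sM$ when $q_i\le 0$. It then adds $\mathrm{partialSum}\le 4sMN_+ + 2sM(N_-+N_0)$ to $z_s\le 2sMN_-$. This charges each negative-weight index at most $2sM+2sM$, rather than the $4sM+2sM$ behind your crude $6sMN$. The bound on $e$ is derived separately, from $P=\sum_i g_iq_i\in[-2sMN_-,\,2sMN_+]$.

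\textbf{Your route.} You use the identity $\mathrm{withOffset}=e+z_wG$, with $e=\sum_{I^+}g_iq_i+\sum_{I^-}(2-g_i)|q_i|$ written as a sum of nonnegative per-index terms. You then bound each of the two pieces by $2sMN$.

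\textbf{What each buys.} Your identity gets more from one computation: it gives $e\ge 0$, $e\le 2sMN$, no underflow in the final subtraction, and the bound on $\mathrm{withOffset}$ all at once. The paper's version stays with the summands the contract actually accumulates. It also isolates the structural reason the constant is $4$: the zero-point shift keeps negative-weight entries small. If you also count zero weights, both arguments give exactly $4sM(N_++N_-)+2sMN_0$.

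\textbf{Your caveats.} Both are resolved correctly.
\begin{itemize}
\item Rounding is monotone and fixes $0$, so $\{q_i<0\}\subseteq\{\beta_i<0\}$, and the extra indices contribute nothing to $z_s$. The paper makes this same switch silently.
\item Clamping $z_w$ to $0$ when all $q_i\ge 0$ is the sensible reading. The literal $-\min_i q_i$ is negative when all $q_i>0$ and so could not be stored as \texttt{uint64}. The upper bounds hold either way.
\end{itemize}
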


{\renewcommand{\qedsymbol}{}
\begin{proof}
Partition the quantised weights into positive, negative, and zero sets with
cardinalities $N_{+}$, $N_{-}$, and $N_{0}$, so that $N_{+}+N_{-}+N_{0}=N$.
Because $g_i \in \{0,1,2\}$ and $z_w=-\min_i q_i \leq sM$, the shifted weight
$u_i = q_i + z_w$ obeys
\[
u_i \leq
\begin{cases}
2 s M, & q_i > 0,\\
s M, & q_i \leq 0.
\end{cases}
\]
Hence the encrypted accumulation before correction satisfies
\[
\mathrm{partialSum} = \sum_i g_i u_i
\leq 4 s M N_{+} + 2 s M (N_{-}+N_{0}).
\]
The score zero-point satisfies
\[
z_s = -\sum_{i:q_i<0} 2 q_i \leq 2 s M N_{-}.
\]
Therefore the intermediate value materialised in Eq.~\ref{eq:encode} is bounded by
\[
\mathrm{withOffset} = \mathrm{partialSum} + z_s
\leq 4 s M N_{+} + 4 s M N_{-} + 2 s M N_{0}
\leq 4 s M N.
\]
This is the binding worst-case accumulator bound.

For the final encoded score, observe that the weight-shift correction exactly
cancels the zero-point contribution:
\[
P = \mathrm{partialSum} - z_w G = \sum_i g_i q_i.
\]
Thus
\[
-2 s M N_{-} \leq P \leq 2 s M N_{+},
\]
and after adding $z_s$ we obtain
\[
0 \leq e = P + z_s \leq 2 s M (N_{+}+N_{-}) \leq 2 s M N.
\]
So the encoded score range itself fits in $\texttt{uint64}$ under the weaker bound
$2 s M N \leq 2^{64}-1$, but the actual implementation must also accommodate the
larger intermediate $\mathrm{withOffset}$.  Therefore the sufficient adversarial
safety condition for all intermediate and final values is
\[
4 s M N \leq 2^{64}-1.
\]
For example, at the balanced scale $s=10^{6}$ with $M=1$, this yields the
mathematical worst-case ceiling
\[
N \leq \left\lfloor \frac{2^{64}-1}{4\times 10^6} \right\rfloor
\approx 4.61 \times 10^{12},
\]
which is many orders of magnitude above the fixture sizes used in our prototype
evaluation.
\end{proof}}

\subsection{Quantisation Advisor}
\label{sec:advisor}

Before publishing any model, an automated quantisation advisor evaluates candidate
scale values $s \in \{10^2, 10^4, 10^6, 10^8, 10^{10}\}$ against the actual GWAS
weight distribution.  For each $s$, the advisor computes the mean absolute error
(MAE) between the quantised dot product and the floating-point PRS for all
individuals in the fixture.  It outputs a recommendation from three tiers:

\begin{itemize}
  \item \textbf{Baseline} ($s \approx 10^2$): 1--15\% MAE. Proof-of-concept only.
  \item \textbf{Balanced} ($s \approx 10^6$): machine-epsilon error.
        Recommended default for all production models.
  \item \textbf{Max precision} ($s \approx 10^8$): no improvement over balanced
        on real GWAS fixtures; the limiting factor is source data precision.
\end{itemize}

The advisor runs in $\approx$200~ms and should be executed before every model
publication.  Gas cost is unaffected by scale choice---the bottleneck is SNP upload
transaction count, not arithmetic precision.

\section{Execution Protocols}
\label{sec:execution}

bioETH-PRS provides two computation strategies that differ in their gas cost,
storage semantics, and support for multi-party participation
(Figure~\ref{fig:protocol}).  Both strategies produce bit-identical results.

\begin{figure}[t]
  \centering
  \includegraphics[width=\columnwidth, keepaspectratio]{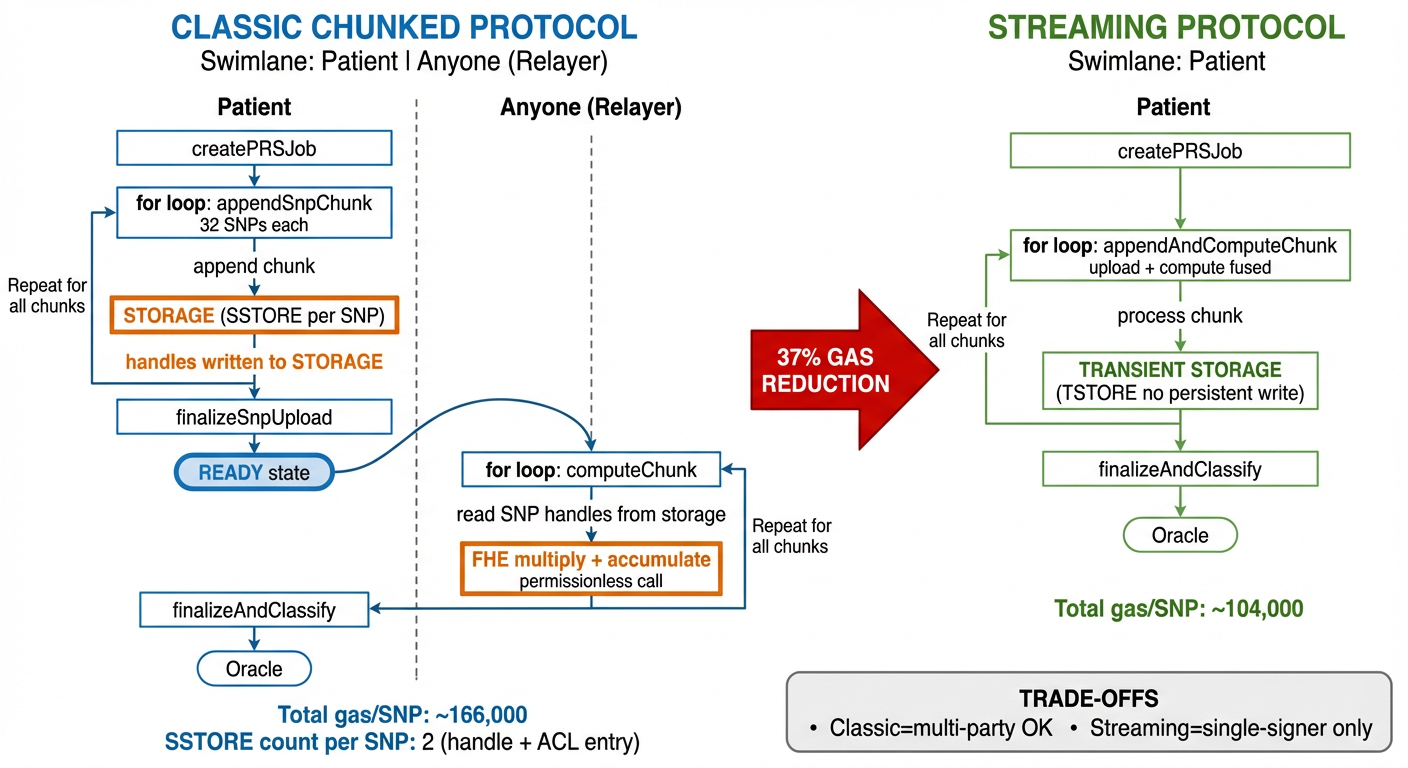}
  \caption{Execution protocol comparison.  The classic chunked path persists SNP
  handles in contract storage (SSTORE), enabling multi-party upload/compute
  separation.  The streaming path uses EIP-1153 transient storage (TSTORE),
  eliminating two SSTORE operations per SNP and reducing total gas by 37\%.  Both
  paths finalize through the Result Oracle.}
  \label{fig:protocol}
\end{figure}

\subsection{Classic Chunked Protocol}

The classic protocol separates SNP ingestion from computation, enabling different
parties to execute different phases at different times.

\begin{algorithm}[t]
\caption{Classic chunked PRS computation (patient and relayer)}
\label{alg:classic}
\small
\begin{algorithmic}[1]
\State \textbf{Patient:} \Call{createPRSJob}{modelId, sampleId}
\Comment{ACL checked; model geometry loaded}
\For{$k \gets 0$ to $\lceil N / r_u \rceil - 1$}
  \State \textbf{Patient:} \Call{appendSnpChunk}{jobId, $\mathrm{encSNPs}_{k}$, proof}
  \Comment{$\leq r_u = 32$ SNPs; handles persisted}
\EndFor
\State \textbf{Patient:} \Call{finalizeSnpUpload}{jobId}
\Comment{Job transitions to READY}
\For{$k \gets 0$ to $\lceil N / r_c \rceil - 1$}
  \State \textbf{Anyone:} \Call{computeChunk}{jobId}
  \Comment{Permissionless; reads persisted SNP handles}
  \State \quad $\mathrm{partialSum} \mathrel{+}= \sum_{i \in \mathrm{chunk}} g_i \cdot u_i$
  \State \quad $G \mathrel{+}= \sum_{i \in \mathrm{chunk}} g_i$
\EndFor
\State \textbf{Patient:} \Call{finalizeAndClassify}{jobId, oracle, $\tau_L$, $\tau_H$}
\end{algorithmic}
\end{algorithm}

The upload chunk size $r_u \leq 32$ is constrained by the fhEVM input-proof budget
(2048 bits / 64 bits per encrypted integer).  The compute chunk size $r_c \leq 20$
is constrained by the HCU budget: each compute chunk of $r_c$ SNPs executes
$3r_c + 2$ FHE operations (multiply, accumulate partial sum, accumulate genotype sum
per SNP, plus two coprocessor calls for accumulator handles).  At $r_c = 20$ this
yields 62 operations---within the measured HCU budget.

SNP ciphertext handles are persisted in a flat per-job mapping across transactions,
enabling upload and compute to be executed by different parties or at different
times.  The \textbf{compute phase is permissionless}: any third party may advance
computation by calling the compute function.  Because computation is deterministic
and non-interactive, a relayer learns no plaintext information and cannot affect
correctness or confidentiality.

\subsection{Streaming Protocol}

\begin{algorithm}[t]
\caption{Streaming PRS computation (patient, single-signer flow)}
\label{alg:streaming}
\small
\begin{algorithmic}[1]
\State \textbf{Patient:} \Call{createPRSJob}{modelId, sampleId}
\For{$k \gets 0$ to $\lceil N / r_c \rceil - 1$}
  \State \textbf{Patient:} append chunk $\mathrm{encSNPs}_{k}$ with proof
  \Comment{$r_c$ SNPs; handles transient only}
  \State \quad $\mathrm{partialSum} \mathrel{+}= \sum_{i \in \mathrm{chunk}} g_i \cdot u_i$
  \State \quad $G \mathrel{+}= \sum_{i \in \mathrm{chunk}} g_i$
  \Comment{SNP handles discarded after each call}
\EndFor
\State \textbf{Patient:} \Call{finalizeAndClassify}{jobId, oracle, $\tau_L$, $\tau_H$}
\end{algorithmic}
\end{algorithm}

The streaming protocol fuses upload and computation into a single call per chunk
(Algorithm~\ref{alg:streaming}).  Each call validates the incoming SNP handles
using EIP-1153 transient storage (scoped to the current transaction), immediately
multiplies them against model weights, accumulates the results, and discards the
handles.  No per-SNP persistent storage writes are performed.

\textbf{Gas savings.}
The classic path requires two persistent storage writes per SNP: a handle record in
contract storage and an ACL entry---each an Ethereum SSTORE at approximately
$25{,}000$ gas.  The streaming path eliminates both, saving $\approx$50,000 gas per
SNP, which accounts for the observed 37\% total reduction.  The irreducible cost
floor of $\approx$95,000--104,000 gas/SNP comprises the input-proof verification,
FHE multiply, and FHE accumulate---coprocessor-level costs not reducible by
contract design.

\textbf{Trade-off.}
The streaming path is incompatible with multi-party upload/compute separation:
upload and compute are coupled within the same transaction, restricting execution
to a single signer.  The classic path remains the appropriate choice for relay
services and architectures where upload and compute are performed by different
parties.

\section{Security Model}
\label{sec:security}

\subsection{Threat Model}

We consider a computationally bounded adversary with full network access to the
blockchain and coprocessor (Figure~\ref{fig:security}).  The adversary may: observe
all on-chain transactions and storage slots; operate a validator node and read all
encrypted handles; submit arbitrary transactions; and query classification outputs
repeatedly with chosen SNP inputs.  The adversary may not break the TFHE hardness
assumption (grounded in the RLWE problem) or forge ACL entries.

\begin{figure}[t]
  \centering
  \includegraphics[width=\columnwidth, keepaspectratio]{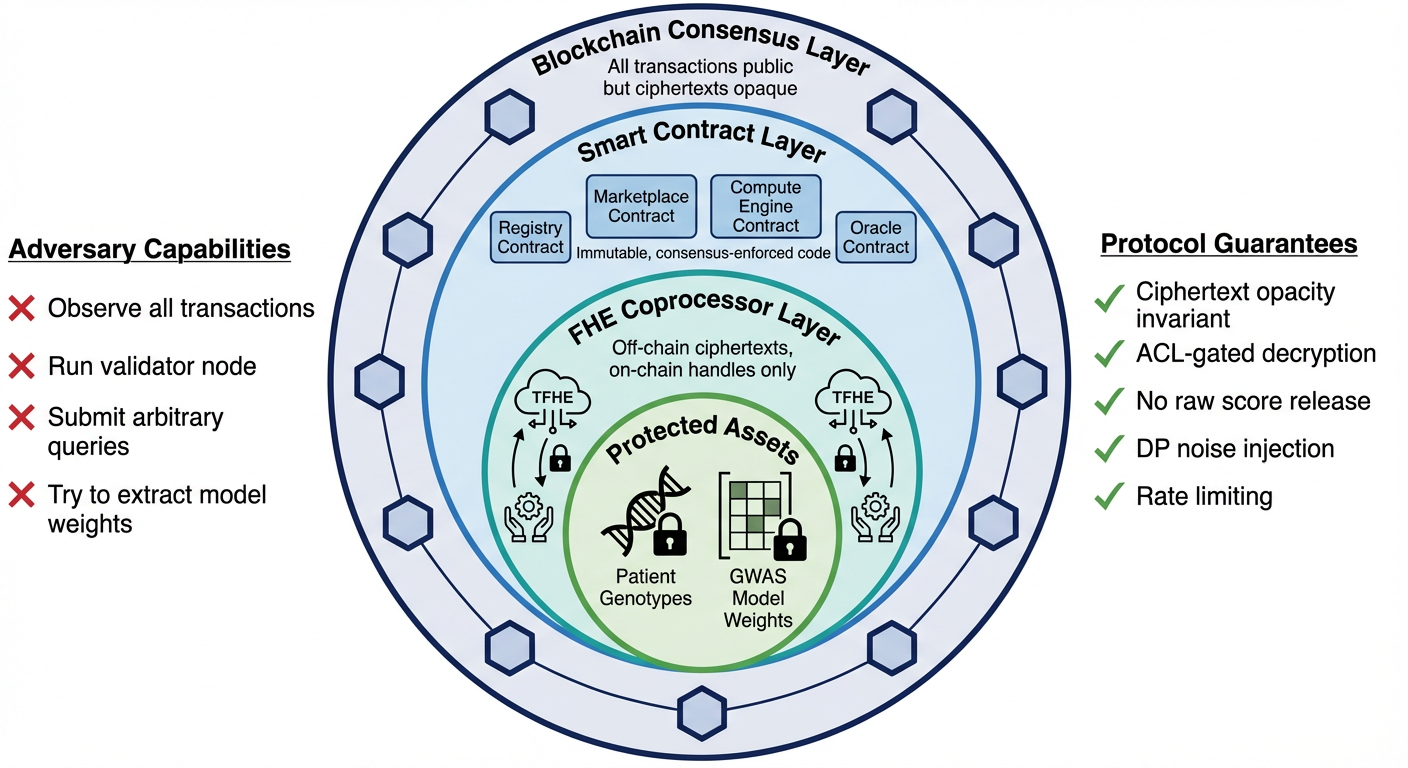}
  \caption{Security threat model and defence layers.  The TFHE/RLWE hardness
  assumption grounds the innermost layer.  Successive layers---FHE coprocessor,
  smart contracts, blockchain consensus---add verifiable security properties.
  Adversary capabilities and corresponding protocol guarantees are enumerated at
  the threat-boundary interface.}
  \label{fig:security}
\end{figure}

\subsection{Core Privacy Invariants}

We formalise the protocol's privacy guarantees as five critical invariants:

\begin{invariant}[Ciphertext opacity]
Encrypted handles are 32-byte opaque identifiers; raw ciphertexts reside
exclusively at the off-chain coprocessor.  No on-chain party stores or observes
plaintexts at any point during computation.
\end{invariant}

\begin{invariant}[ACL-gated decryption]
Every encrypted output is associated with an ACL record.  Decryption requires an
explicit ACL grant that can only be issued by the executing smart contract upon
satisfying the protocol's state-machine preconditions.
\end{invariant}

\begin{invariant}[No raw score release]
The final encoded PRS $e$ (Eq.~\ref{eq:encode}) is never made publicly
decryptable.  When oracle-required mode is disabled, the requester may obtain
ACL-gated decryption access to the raw score through \texttt{finalize()}.  When
oracle-required mode is enabled, that requester path is disabled: the raw score
handle is routed only to the approved oracle, while the Result Oracle emits an
encrypted noisy score handle and a publicly decryptable ternary category after
noise addition.
\end{invariant}

\begin{invariant}[Oracle-required mode]
A model owner may set an oracle-required flag, causing the compute engine to revert
on any attempt to release the raw score handle directly to the requester without
passing through the noisy output oracle.  This prevents requesters from bypassing
noise by decrypting raw scores.  When oracle-required mode is active, only the
registered approved oracle address may receive the encrypted raw-score handle.
\end{invariant}

\begin{invariant}[Single-finalize guarantee]
Each compute job carries a finalized flag.  Any finalization path sets this flag
and reverts on subsequent calls, preventing a requester from obtaining multiple
score handles for a single rate-limit slot.
\end{invariant}

\subsection{Noisy Output Release}

The current Result Oracle implements a DP-inspired noisy output release mechanism,
not a formally calibrated $(\varepsilon, \delta)$-differential privacy guarantee.
For each classification request, noise is drawn from $[0, B)$ where $B$ is a
power-of-two immutable constant set at oracle deployment:
\begin{equation}
e_{\mathrm{noisy}} = e + \nu, \quad \nu \sim \mathrm{Uniform}(0, B).
\end{equation}
The noise $\nu$ is generated by the fhEVM random source, unseeded by the caller
and determined solely by block randomness.  The caller has no influence over $\nu$.

All comparisons against classification thresholds operate on the noisy encrypted
score using encrypted boolean operations, avoiding any branching on hidden values:
\begin{align}
\hat{L} &= \mathbb{1}[e_{\mathrm{noisy}} < \tau_L], \notag\\
\hat{H} &= \mathbb{1}[e_{\mathrm{noisy}} \geq \tau_H], \notag\\
\mathrm{cat} &= \mathrm{select}\bigl(\hat{L},\;L,\;\mathrm{select}(\neg\hat{L}\wedge\neg\hat{H},\;M,\;H)\bigr).
\label{eq:classify}
\end{align}
The encrypted select primitive replaces conditional branching,
preventing side-channel leakage of the comparison result.  The minimum threshold-gap
requirement $\tau_H - \tau_L \geq B$ ensures the noisy score cannot deterministically
map to a single category regardless of $e$, reducing straightforward threshold-probing
strategies.

\textbf{Interpretation and limits.}
This mechanism is best understood as an anti-probing control that randomises the
released classification boundary.  A formal DP claim would require an explicit
neighbouring-dataset definition, a sensitivity analysis for the PRS score under the
chosen adjacency relation, and calibration of $B$ to concrete privacy parameters.
Those steps are not yet part of the present prototype, so we avoid assigning
specific $(\varepsilon, \delta)$ values.

\textbf{Bias correction.}
Uniform noise on $[0, B)$ introduces an expected upward bias of $B/2$.  Callers
must adjust classification thresholds by this amount, a deterministic correctable
offset exposed by the oracle's view function.  The anti-probing effect derives from
noise variance, not its mean.

\subsection{Anti-Probing: Rate Limiting}

Without query limits, an adversary could mount a model-extraction attack by
submitting many classification queries with crafted SNP inputs and observing
the resulting categories.  The compute engine enforces per-model, per-wallet, and
per-sample block-windowed job quotas to raise this cost.

Let $R$ be the maximum queries per $W$-block window, $B$ the noise bound, and
$K = 3$ the number of output categories.  Each query reveals at most
$\log_2(K) \approx 1.58$ bits minus noise entropy.  At suggested settings for
private models ($R = 3$, $W = 1000$, $B = 128$), extracting a single 20-bit weight
requires approximately $2 \times 10^4 / (3 \times 1.58) \approx 4{,}220$ block
windows, corresponding to $\approx$2,800 hours at 12~s/block.  Block-based windows
(rather than timestamps) prevent miner manipulation of window boundaries.

\section{Empirical Evaluation}
\label{sec:evaluation}

\subsection{Experimental Setup}

\textbf{Prototype evaluation (mock coprocessor).} We evaluate bioETH-PRS against the four fixture sizes (100, 500, 1,000, and 5,000
SNPs) from the HEPRS reference dataset \citep{knight2026heprs}, which contains real
GWAS beta weights for a schizophrenia model at each size.  Gas measurements use a
Hardhat in-process mock coprocessor that validates the complete fhEVM protocol
(handles, ACL, input proofs) while performing plaintext arithmetic; gas numbers
are expected to be within 10--20\% of real-network deployment.
The HCU ceiling was measured empirically: chunk sizes of 20~SNPs (62 ops) pass;
25~SNPs (77 ops) fail, confirming the measured 60--74-operation HCU budget.
All experiments use
upload chunk size $r_u = 32$ and compute chunk size $r_c = 20$.

\subsection{Quantisation Accuracy}

At the balanced scale tier ($s \approx 10^6$), reconstructed PRSs agree with
floating-point computation to machine epsilon across all 50 individuals in each of
the four fixtures (200 individual-fixture combinations), as shown in
Table~\ref{tab:quantisation}.  No negative encoded scores or unsigned 64-bit integer
overflow events were observed.  The baseline tier ($s \approx 10^2$) yielded
1--15\% MAE---unsuitable for clinical use, as a 15\% score error is equivalent
to misclassifying borderline patients.  The max-precision tier ($s \approx 10^8$)
provided no additional accuracy improvement: the limiting factor is the number of
significant digits in the source GWAS betas rather than the encoding resolution.
This confirms that the balanced tier ($s \approx 10^6$) is both necessary and
sufficient for production deployment.

\begin{table}[ht]
\centering
\caption{Quantisation accuracy on HEPRS fixtures (balanced tier)}
\label{tab:quantisation}
\small
\begin{tabular}{rrrll}
\toprule
\textbf{SNPs} & \textbf{Scale $s$} & \textbf{Bits req.} & \textbf{MAE} & \textbf{Indiv.} \\
\midrule
100   & $3\times10^6$ & 16 & $<\epsilon_\mathrm{mach}$ & 50/50 pass \\
500   & $3\times10^6$ & 16 & $<\epsilon_\mathrm{mach}$ & 50/50 pass \\
1,000 & $1\times10^6$ & 16 & $<\epsilon_\mathrm{mach}$ & 50/50 pass \\
5,000 & $1\times10^6$ & 16 & $<\epsilon_\mathrm{mach}$ & 50/50 pass \\
\bottomrule
\end{tabular}
\end{table}

\subsection{Gas Consumption and Scaling}

Figure~\ref{fig:gas} and Table~\ref{tab:gas} report total gas consumption across
execution paths and fixture sizes in the mock-coprocessor prototype environment.
Gas scales linearly with SNP count in both paths, confirming the absence of hidden
quadratic overhead at the contract layer.  The streaming path saves approximately
37\% above 500 SNPs, driven entirely by the elimination of two SSTORE operations per SNP.

\begin{figure}[t]
  \centering
  \includegraphics[width=\columnwidth, keepaspectratio]{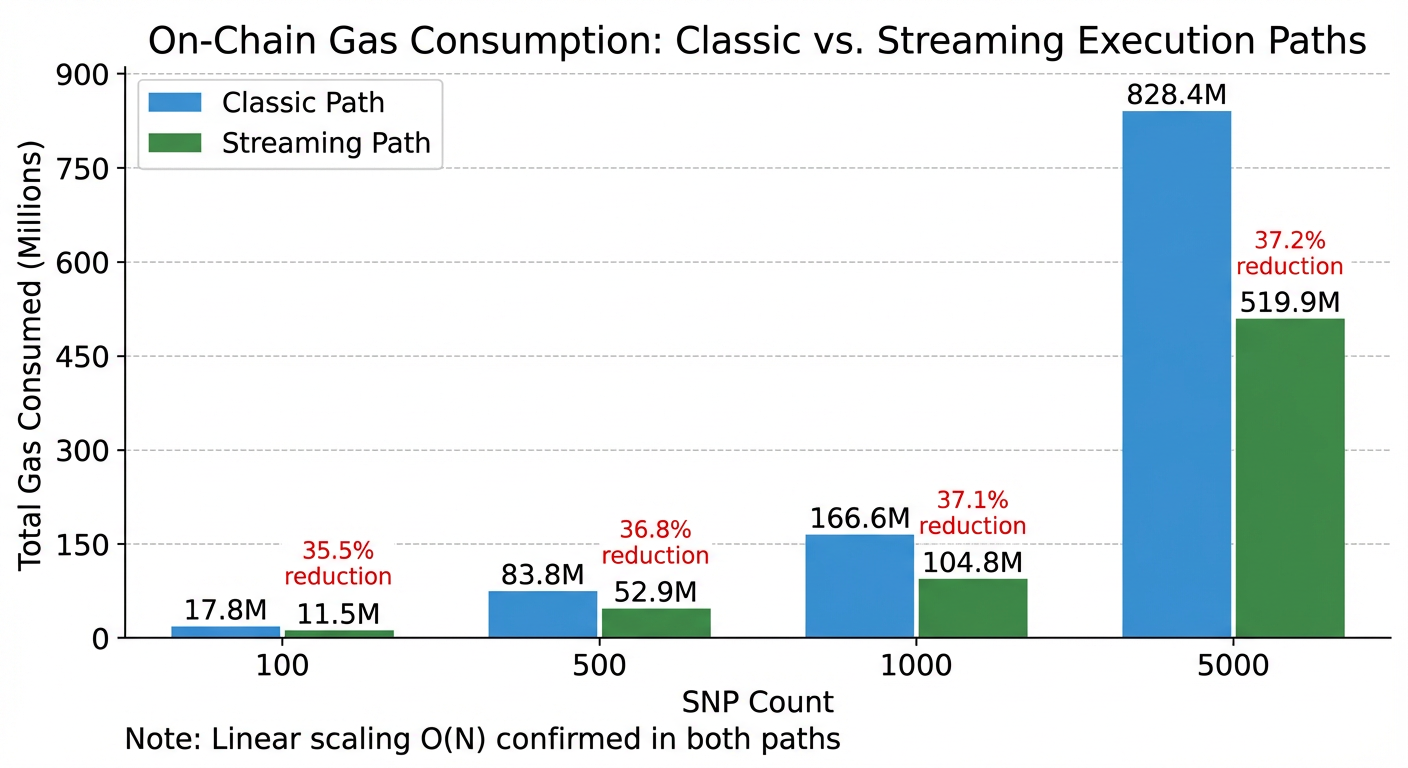}
  \caption{On-chain gas consumption versus SNP count for both execution paths.
  Gas scales linearly ($\mathcal{O}(N)$) in both cases.  The streaming path
  achieves 35.5--37.2\% gas reduction over the classic chunked path by eliminating
  persistent SSTORE operations for intermediate SNP ciphertext handles.}
  \label{fig:gas}
\end{figure}

\begin{table}[ht]
\centering
\caption{Mock-environment total gas consumption by execution path}
\label{tab:gas}
\small
\begin{tabular}{rrrr}
\toprule
\textbf{SNPs} & \textbf{Classic} & \textbf{Streaming} & \textbf{Reduction} \\
\midrule
100   & 17,822,364  & 11,496,911  & 35.5\% \\
500   & 83,943,764  & 53,021,675  & 36.8\% \\
1,000 & 166,832,473 & 104,972,012 & 37.1\% \\
5,000 & 829,373,084 & 520,440,311 & 37.2\% \\
\bottomrule
\end{tabular}
\end{table}

\subsection{Per-SNP Cost Decomposition}

Table~\ref{tab:persnp} decomposes per-SNP gas.  The irreducible floor of
$\approx$95,000--104,000 gas/SNP comprises the input-proof verification, FHE
multiply, and FHE accumulate---operations set by the fhEVM protocol and not
reducible by contract design choices.  Future improvements in coprocessor
throughput or lighter-weight proof schemes would directly reduce this floor.

\begin{table}[ht]
\centering
\caption{Per-SNP gas decomposition}
\label{tab:persnp}
\small
\setlength{\tabcolsep}{4pt}
\begin{tabularx}{\columnwidth}{@{}Yrr@{}}
\toprule
\textbf{Operation} & \textbf{Classic} & \textbf{Streaming} \\
\midrule
Input proof verification         & $\sim$50,000 & $\sim$50,000 \\
FHE multiply                     & $\sim$27,000 & $\sim$27,000 \\
FHE accumulate (partial sum)     & $\sim$27,000 & $\sim$27,000 \\
Handle SSTORE (classic only)     &  $\sim$25,000 & --- \\
ACL SSTORE (classic only)        &  $\sim$25,000 & --- \\
SLOAD, misc.\ overhead           &   $\sim$12,000 & --- \\
\midrule
\textbf{Total/SNP}               & $\approx$\textbf{166,000} & $\approx$\textbf{104,000} \\
\bottomrule
\end{tabularx}
\end{table}

\subsection{Latency}

On the Hardhat mock coprocessor, end-to-end wall-clock times for the streaming
path are: 100 SNPs: $\approx$386~ms; 500 SNPs: $\approx$1.5~s; 1,000 SNPs:
$\approx$2.8~s; 5,000 SNPs: $\approx$14.5~s.  These are local simulation
timings; real-network latency is dominated by per-block confirmation time, as each
chunk corresponds to one on-chain transaction.  The transaction count scales as
$\lceil N / r_c \rceil + 2$ for the streaming path.

\subsection{Deployment Cost Projections}

Table~\ref{tab:cost} projects per-analysis cost across three gas pricing scenarios.
These figures are scenario analyses derived from mock-environment gas measurements,
not measured deployment prices.  Ethereum L1 pricing ($30$ gwei) renders the system
economically unviable for clinical use.  At L2-equivalent pricing on a purpose-built
FHE rollup ($0.05$ gwei), 100-SNP and 500-SNP analyses project to \$1.72 and \$7.95,
respectively.  Under such low-gas assumptions, the system may become cost-competitive
with centralised commercial genomics services (\$50--\$300), but that comparison
remains contingent on real-network deployment.  Dedicated application-chain deployment
reduces costs by two further orders of magnitude in the same pricing model, making even
5,000-SNP analyses economically plausible at \$1.56 per run.

\begin{table}[ht]
\centering
\caption{Projected USD cost per analysis (streaming path, ETH at \$3,000)}
\label{tab:cost}
\footnotesize
\setlength{\tabcolsep}{3pt}
\begin{tabular}{rrrr}
\toprule
\textbf{SNPs} & \textbf{L1 (30~gwei)} & \textbf{L2 (0.05~gwei)} & \textbf{App\-chain} \\
\midrule
100   & \$1,035  & \$1.72  & \$0.034 \\
500   & \$4,772  & \$7.95  & \$0.159 \\
1,000 & \$9,447  & \$15.75 & \$0.315 \\
5,000 & \$46,840 & \$78.07 & \$1.561 \\
\bottomrule
\end{tabular}
\end{table}

\subsection{Correctness and Protocol Verification}

For each fixture size, we verified that the on-chain encoded score, after
decryption and application of the quantisation inverse mapping, matches the
plaintext floating-point dot product.  For the 100-SNP HEPRS fixture, the expected
encoded score of 758,685 was reproduced exactly across all validation runs.  Both
execution paths produce identical scores for all four fixture sizes (verified with
plaintext arithmetic in the mock coprocessor).

Protocol invariants---ACL enforcement, state-machine integrity, single-finalize
guarantee, minimum threshold gap, oracle-required mode, approved-oracle
enforcement, rate-limiting window behaviour, and ACL revocation handling---were
validated across 134 automated test cases.  The HCU ceiling probe confirmed the
mock budget at 60--74 operations per transaction.

\section{Access Control and Compute Flows}
\label{sec:acl}

\subsection{Encrypted Handle Lifecycle}

Every encrypted value in the fhEVM has an associated ACL record maintained in a
dedicated on-chain contract.  The ACL maps handles to authorised addresses and
distinguishes four grant types with different persistence semantics:

\begin{itemize}
  \item \textbf{Persistent contract grant}: Written to the ACL contract's persistent
        storage via SSTORE.  Allows the owning contract to use the handle in future
        transactions.  This is the primary source of gas cost in the classic upload
        path ($\approx$25,000 gas per SNP handle).
  \item \textbf{Persistent user grant}: Written to the ACL contract via SSTORE.
        Allows an end-user to decrypt the handle via the KMS gateway.  Applied at
        finalization to grant the requester access to their encoded score.
  \item \textbf{Transient grant}: Uses EIP-1153 transient storage.  Valid only for
        the current transaction.  The streaming path relies exclusively on transient
        grants for intermediate SNP handles, avoiding the SSTORE cost entirely.
  \item \textbf{Public grant}: Enables anyone to decrypt.  Applied only to the
        ternary risk category after noise addition; never applied to score values.
\end{itemize}

The ACL design is the root of the 37\% gas differential between the classic and
streaming paths.  In the classic path, each of the $N$ SNP handles requires a
persistent contract grant (SSTORE in the ACL) and a handle write to contract
storage (second SSTORE).  In the streaming path, transient grants are used for
SNP handles and the SSTORE costs are eliminated.

\subsection{State Machine and Mutual Exclusion}

The PRS Compute Engine enforces a strict job state machine to prevent incorrect
ordering of operations:
\begin{equation*}
\texttt{PENDING} \!\to\! \texttt{UPLOADING}
\!\to\! \texttt{READY}
\!\to\! \texttt{COMPUTING} \!\to\! \texttt{DONE}.
\end{equation*}
The classic and streaming paths are mutually exclusive per job.  The guard condition
uses upload counts: if any SNP handles have been written to persistent storage, the
classic path is in use and streaming calls are rejected; if computation has advanced
without persistent uploads, the streaming path is in use.  This prevents hybrid
calls that could compromise the handle-lifecycle invariants.

\subsection{Private Model Access Control}

When a GWAS model is published with private (encrypted) weights, access control is
layered: the model owner must first authorise the compute engine as a reader; individual
requesters must also be added to the private model reader list; and each compute chunk
re-checks reader authorisation, so revocation of a requester mid-job causes the next
compute chunk to fail, effectively terminating the job.  This is in contrast to the
registry ACL, which is checked only at job creation.  The asymmetric behaviour between
registry and model ACL is documented and empirically verified.

\section{Discussion}
\label{sec:discussion}

Our work presents bioETH-PRS, a privacy-preserving framework for polygenic risk scoring that replaces the trusted third-party evaluator used in prior homomorphic-encryption approaches with auditable smart contracts on an FHE-enabled blockchain. It enables encrypted computation of PRS while protecting both patient genotypes and GWAS model weights, introduces an overflow-safe quantization scheme, and demonstrates linear scaling, exact score correctness, and a 37

\subsection{HEPRS and bioETH-PRS: Complementary Systems}

bioETH-PRS and HEPRS \citep{knight2026heprs} are best understood as complementary
rather than competing systems occupying different points in a design space defined
by trust requirements and scalability.

HEPRS is appropriate when: (1) three-party non-collusion can be organisationally
enforced (e.g., hospital evaluator, academic modeler, clinical client); (2) SNP
counts exceed 5,000; and (3) the evaluation environment provides sufficient RAM
($>$65~GB for 1,000 individuals at 110,000 SNPs).  In this context, CKKS's native
signed float arithmetic eliminates the need for quantisation.

bioETH-PRS is appropriate when: (1) no trusted evaluator can be assumed; (2)
model weights must remain private not just from the patient but from the evaluator
as well; (3) computation must be verifiable without re-running it; or (4) the use
case benefits from public, decentralised access without institutional infrastructure.
These conditions commonly arise in genomic research consortia, cross-institutional
data sharing, and commercial PRS services.

\subsection{Limitations and Open Problems}

\textbf{SNP count ceiling.}
In our mock-coprocessor measurements (Section~\ref{sec:evaluation}), the HCU budget
of 60--74 FHE operations per transaction limits each compute chunk to 20 SNPs.
Scaling
to genome-wide models ($>$50,000 SNPs) would require thousands of transactions,
making on-chain costs prohibitive at any current gas pricing.  HEPRS remains the
appropriate tool at that scale.  Future FHE coprocessor improvements may raise this
ceiling.

\textbf{SNP provenance.}
The system verifies ACL access to a registered sample but cannot verify that
submitted encrypted SNP values faithfully represent that sample's genotype.  A
malicious requester could submit arbitrary ciphertexts.  Rate limiting and noisy classification mitigate information leakage per query but do not prevent fabricated
queries.

\textbf{URI observability.}
Sample data URIs are on-chain plaintext.  Any network node can observe storage
slots directly, even if read access is ACL-gated at the application layer.  Storing
only a cryptographic commitment to the URI would address this residual exposure.

\textbf{DP bias.}
Uniform noise on $[0, B)$ introduces an expected upward bias of $B/2$.  True
zero-mean DP would require signed noise, which the current fhEVM does not natively
support.  Threshold adjustment is a correctable workaround but adds complexity for
deployers.

\textbf{Gas cost at L1 pricing.}
Commercial viability requires a purpose-built FHE rollup or application chain.
At Ethereum L1 gas prices, any SNP count is economically indefensible for routine
clinical use.

\subsection{Future Directions}

Planned extensions include: (1) LD-aware weighting in the encrypted domain,
incorporating LDpred2-style shrinkage \citep{prive2020ldpred2}; (2) multi-party key
generation enabling federated cohort analyses without a single decryption authority;
(3) model versioning and deprecation mechanisms; (4) SNP handle re-use across
multiple models from the same registered sample; and (5) token-based fee mechanisms
to economically incentivise model quality.  The architecture's modularity---four
independently upgradeable contracts---is designed to accommodate these extensions
without disrupting existing deployments.

\section{Related Work}
\label{sec:related}

Privacy-preserving genomic computation has been pursued through multiple
cryptographic paradigms.  Kim and Lauter \citep{kim2015} applied homomorphic
encryption schemes to compute GWAS statistics for up to 5,000 sequences, but did not
extend to PRS computation.  Blatt et al.\ \citep{blatt2020} built a CKKS-based
privacy-preserving GWAS pipeline for large-scale studies, noting that PRS
computation from decrypted summary statistics was future work.  McLaren et al.\
\citep{mclaren2016} demonstrated privacy-preserving genomic testing in an HIV
clinical-care model using homomorphic encryption, but did not address the full PRS
pipeline.  Raisaro et al.\ \citep{raisaro2019} introduced MedCo for secure,
privacy-preserving exploration of distributed clinical and genomic data.

Knight et al.\ \citep{knight2026heprs} (HEPRS) is the closest prior work, providing
the first complete FHE pipeline for PRS with real clinical data.  HEPRS achieves
$r > 0.999$ correlation with plaintext scores on a 110,000-SNP schizophrenia
model and demonstrates practical feasibility on commodity hardware.  bioETH-PRS
builds directly on the HEPRS computational framework and uses the same GWAS fixture
datasets, extending it to a blockchain setting that removes the designated evaluator.

On the blockchain side, prior work on privacy-preserving smart contracts has
focused on anonymous payments and related protocol work
\citep{sasson2014zerocash,pertsev2019tornado} and
zero-knowledge proof-based computation \citep{ben2018starkware}, which enable
verifiable computation but not confidential computation, ZK-SNARKs prove knowledge
without concealing the program inputs.  FHE on the blockchain, typified by the fhEVM
framework \citep{zamafhevm}, enables both confidentiality and verifiability in
a single system.  To our knowledge, bioETH-PRS is the first application of fhEVM
to clinical genomics.

\section{Conclusion}
\label{sec:conclusion}

We presented bioETH-PRS, a protocol for confidential polygenic risk
scoring using TFHE-based fully homomorphic encryption on a programmable blockchain.
By replacing the centralised evaluator of HEPRS \citep{knight2026heprs} with
immutable, consensus-enforced smart contracts, we remove the designated evaluator assumption that is the principal residual vulnerability of prior centralised FHE approaches to genomic computation, while still depending on the fhEVM stack, contract correctness, ACL/decryption infrastructure, and blockchain liveness.

Key contributions are: a four-layer contract architecture with independently
auditable trust boundaries; a provably overflow-safe three-step fixed-point
quantisation scheme achieving machine-epsilon reconstruction accuracy on real GWAS
datasets; a streaming execution path reducing on-chain gas consumption by 37\%; and
an on-chain noisy output oracle with cryptographically generated noise, mandatory
activation, and anti-probing rate limiting.

Prototype evaluation on HEPRS fixtures confirms linear gas scaling and exact score
correctness.  Under L2-equivalent gas pricing assumptions, projected costs range
from USD\,\$1.72 to \$7.95 for 100--500 SNP analyses.  That range suggests the
approach may be practical for curated clinical PRS panels in low-gas environments,
while offering stronger evaluator minimisation than centralised alternatives.

The system demonstrates that, for the 100--5,000 SNP range covering a subset of
curated polygenic models, on-chain FHE computation without a trusted
evaluator is technically feasible and may become economically practical on specialised
low-gas networks.  Continued improvements in FHE
coprocessor throughput and purpose-built chain gas pricing will expand this range
toward genome-wide models and bring private, verifiable genomic computation within
reach of routine clinical practice.

\section*{Biographical Note}
Kimon Antonios Provatas, Christos Galanopoulos, and Ilias Georgakopoulos-Soares are researchers at The University of Texas at Austin studying pharmacology, genomics, and privacy-preserving biomedical computation at the Dell Pediatric Research Institute.

\section*{Data Availability Statement}
No new datasets were generated or analysed in this study.

\section*{Code Availability Statement}
All relevant code and results can be found on GitHub at \url{https://github.com/Georgakopoulos-Soares-lab/bioETH-PRS}.

\bibliographystyle{abbrvnat}
\bibliography{bioeth_prs}

\end{document}